\def\BibTeX{{\rm B\kern-.05em{\sc i\kern-.025em b}\kern-.08em
    T\kern-.1667em\lower.7ex\hbox{E}\kern-.125emX}}
\algnewcommand{\LineComment}[1]{\State \(\triangleright\) #1}
\theoremstyle{definition}
\newtheorem{definition}{Definition}[section]
\newtheorem*{remark}{Remark}
\begin{document}

\title{Correctness of Flow Migration Across Network Function Instances}
\author{\IEEEauthorblockN{Ranjan Patowary\IEEEauthorrefmark{1},
Gautam Barua\IEEEauthorrefmark{2}, Radhika Sukapuram\IEEEauthorrefmark{3}}

\IEEEauthorblockA{
\IEEEauthorrefmark{1} Central Institute of Technology Kokrajhar, India,\IEEEauthorrefmark{2}\IEEEauthorrefmark{3} Indian Institute of Information Technology Guwahati, India \\
Email: \IEEEauthorrefmark{1}r.patowary@cit.ac.in,
\IEEEauthorrefmark{2}gb@iiitg.ac.in,
\IEEEauthorrefmark{3}radhika@iiitg.ac.in}
\thanks{This work has been submitted to the IEEE for possible publication. Copyright may be transferred without notice, after which this version may no longer be accessible.}
}

\maketitle
\IEEEpubid{0000--0000/00\$00.00~\copyright~2021 IEEE}
 
\IEEEpubidadjcol 

\begin{abstract}
Network Functions (NFs) improve the safety and efficiency of networks. Flows traversing NFs may need to be migrated to balance load, conserve energy, etc. When NFs are stateful, the information stored on the NF per flow must be migrated before the flows are migrated, to avoid problems of consistency. We examine what it means to correctly migrate flows from a stateful NF instance. We define the property of Weak-O, where only the state information required for packets to be correctly forwarded is migrated first, while the remaining states are eventually migrated. Weak-O can be preserved without buffering or dropping packets, unlike existing algorithms. We propose an algorithm that preserves Weak-O and prove its correctness. Even though this may cause packet re-ordering, we experimentally demonstrate that the goodputs with and without migration are comparable when the old and new paths have the same delays and bandwidths, or when the new path has larger bandwidth or at most 5 times longer delays, thus making this practical, contrary to what was thought before. We also prove that no criterion stronger than Weak-O can be preserved in a flow migration system that requires no buffering or dropping of packets and eventually synchronizes its states.
\end{abstract}

\begin{IEEEkeywords}
Flow migration, Network Function, consistency, packet buffering, state migration
\end{IEEEkeywords}

\section{Introduction}
\label{intro}
%
%
%
%
\IEEEPARstart{M}{iddleboxes} used to improve the safety and efficiency of a network and functionalities of wireless networks such as mobility management are implemented in
software, on commodity hardware, as Network Functions (NFs) \cite{yi2018comprehensive}. This reduces expenses. 
Flows may be migrated from a set of 
NFs (the old path) to another set of
NFs (the new path)  to reduce load on a set
of network nodes, to consolidate flows  to
conserve energy, etc. 

A \textit{flow} is a sequence of packets that have the same set of headers.
The set of values associated with various structures or objects of a node
related to a flow is called a \textit{state}.  If the old path consists of
nodes that do not maintain state information related to flows, flow migration
only requires updating the rules on the old and the new network nodes so that
the flow can change course in a per-packet consistent (PPC) manner --- that is, every packet of a flow traverses either the old path or the new path and never a combination of both, thus avoiding packet drops or loops
\cite{sukapuram2019ppcu}. However, if a subset of the nodes on the old path
maintain state information with respect to flows, migrating flows would
\textit{additionally} require state information to be migrated to the
corresponding set of nodes on the new path before packets are forwarded to the
new path. 

Programmable data planes 
 are used to implement   applications such as load
balancing 
and end-host functionalities such as
consensus control 
\cite{hauser2023survey, ZHANG2021107597}. 
Various solutions exist on flow migrations from NFs implemented on servers \cite{gember2015opennf,wang2017consistent, gember2015improving,
szalay2019industrial, wang2016transparent} and from NFs or other applications implemented on programmable data planes \cite{luo2017swing, xing2020secure,he2018p4nfv}. If states are not migrated correctly, instances of the new stateful nodes may function
incorrectly, resulting in dropped packets or anomalous behaviour of nodes,
often compromising network safety.
Our focus is to formally define what constitutes a
correct flow migration, if network nodes are stateful in general, and in particular,
if they maintain per-flow states, assuming that the rules on paths are updated without drops, loops or congestion, to define a ``weak'' property that must be preserved during migration, and to demonstrate that preserving the weak property requires no buffering or packet drops, and yet, is sufficient in practice. Buffering is avoided to mitigate the  risk of buffer overflow 
\cite{sukapuram2021loss}.

\IEEEpubidadjcol 
Consider the network in Fig. \ref{prop} that depicts migration of a flow $fl$
from an instance of an NF, $NF_{1a}$ (\textit{source NF} (sNF)), to another instance of the same NF,
$NF_{1b}$ (\textit{destination NF} (dNF)).  Both the NF instances and the functions required to migrate a flow
comprise the Flow Management System (FMS).  Let the NF under
discussion be a Network Address Translator (NAT).  After the first packet of
$fl$, a SYN, reaches $NF_{1a}$, a tuple comprising a source IP address and a source
port, denoted as $y_1$, is chosen corresponding to its original source IP address and source port, 
denoted as $x_1$.
The chosen tuple is stored in $NF_{1a}$ corresponding to $fl$. When $fl$ is
subsequently migrated, 
the first packet $p_1$ that arrives at
$NF_{1b}$ needs to wait for the state $state[x_1,y_1]$ corresponding to it to arrive at $NF_{1b}$ from $NF_{1a}$, in
order that it can be forwarded. If this information is not available at $NF_{1b}$, $p_1$ must be dropped. 

\begin{figure}[t]
\centerline{\includegraphics [scale=0.4] {./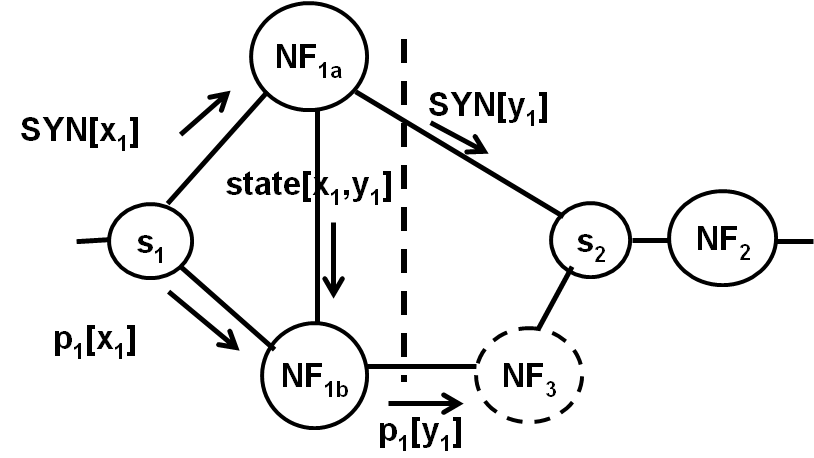}}
\caption{Properties of flow migration}
\label{prop}
\end{figure}
\IEEEpubidadjcol 
If the NF instances $NF_{1a}$ and $NF_{1b}$ in Fig. \ref{prop} correspond to a
Network Intrusion Prevention System (NIPS), they require that packets arrive at
the NF instances in the order that they enter the FMS. 
Moreover, if $p_1$
reaches $NF_{1a}$ and $p_2$ reaches $NF_{1b}$, the state corresponding to $p_1$
at $NF_{1a}$ must be updated at $NF_{1b}$ before $NF_{1b}$ processes $p_2$, for correct operation. 

Previous work~\cite{sukapuram2021loss} defines the property of Order (O), which is a correctness
criterion for flow migration. Informally, if an FMS preserves Order, the
sequence of states created in the source and destination NFs are subsequences
of the states created in the original node if there is no migration (called the
\textit{ideal NF}).  This implies that every state created in
the sNF during flow migration needs to be migrated immediately to the
dNF and vice-versa, in real-time order. 

In Fig. \ref{prop}, assume that $NF_{1a}$ and $NF_{1b}$ are NATs, $NF_2$ is a NIPS,
and $NF_3$ is not present, for now. 
For correct functioning of $NF_2$, it is required that packets enter
$NF_2$ without being re-ordered,
as though no flow migration is occurring.
If it is External-Order (E)~\cite{sukapuram2021loss}
that is preserved, the order in which packets exit the flow migration system (at the interface shown by the vertical dotted line) is
the same as the order in which they exit an ideal NF if there was no
migration. Thus O ensures that the states in the FMS
are exactly the same and in the same order as an ideal NF, while 
E ensures that the packets that exit the FMS
are exactly the same and in the same order as an ideal NF. If a flow migration
preserves E, subsequent NFs will not be aware that a flow migration has
taken place. 
There may be other impacts if packets are reordered: some NFs attempt preventing re-ordering of packets by buffering packets that arrive out of order \cite{yu20163}. Re-ordering would thus cause increase in buffer
sizes in such NFs. We examine these constraints in Section \ref{discussion}.

Assume that  a
NIPS $NF_3$ is inserted after $NF_{1b}$ \cite{zave2020verified} in Fig.\ref{prop}. 
After the state created on $NF_{1a}$ for a SYN is migrated to $NF_{1b}$, if the
packets entering $NF_{1b}$ are re-ordered, 
 the working of $NF_3$ is affected.  If a flow migration
preserves Strict-Order (SO) \cite{sukapuram2021loss}, in addition to preserving O, the
FMS will ensure that packets maintain the order in which they
enter the system while they enter the source or the destination NFs, as the
case may be, whether the packets effect any change in the state of these nodes
or not. 

Ideally, every flow migration should
preserve the highest of the above properties in the hierarchy, that is, E \cite{sukapuram2021loss}, in
order that subsequent stateful nodes are unaware of the occurrence of a flow
migration at nodes preceding it in the flow.
However, preserving E is difficult to achieve, as it requires buffering packets
either before migration or after migration. Similarly, guaranteeing to preserve
O or SO requires that packets are either buffered (thus not preserving the No-buffering property (N)) or dropped (thus not preserving the Loss-freedom property(L)), as per the LON theorem~\cite{sukapuram2021loss}.  In
this paper, \textit{ we  explore the correctness criteria for flow migration and answer these questions:
Can there be weaker properties that are practical for flow migration from one NF instance to another? How can such properties be preserved during flow migration? What is the strongest property that can be preserved while not buffering and not dropping packets?}

Our contributions are: 
1) We argue that a flow can be divided into parts based on whether NF states need to be
synchronized before further packets can be forwarded or not. Based on this, we propose the consistency property of \textit{Weak-O}, that weakens O. 
2) For the first time, we present an algorithm that migrates flows preserving Weak-O and without buffering or dropping packets 
 3) We implement the algorithm and demonstrate that  
a) the goodput with and without migration
are comparable when the old and new paths have the same
delays and bandwidths, or when the new path has larger
bandwidth or at most 5 times longer delays, for TCP flows. b) The time for migration does not matter as no buffers are maintained or packets are dropped. c) Due to the advanced congestion control algorithms that can detect reordering and reduce retransmissions  \cite{arianfar2012tcp}, reordering is not as much an issue as was thought in the seminal work on flow migration \cite{gember2015opennf} and subsequent literature. The above were published as a conference paper in ICDCN 2024. In this paper, we prove the correctness of the algorithm (Theorem \ref{theorem:preserves}, Section \ref{section:WeakO}). We also prove that no criterion
stronger than Weak-O can be preserved in a flow migration
system that requires no buffering or dropping of packets and
eventually synchronizes its states (Section \ref{section:strongest}).
\section{Model}
\label{sys-model}
The model and definitions of O, SO and E are quoted from our previous work \cite{sukapuram2021loss} (with minor edits) for clarity of exposition:

\begin{figure}
\centerline{\includegraphics [scale=0.4, trim=30 20 10 0] {./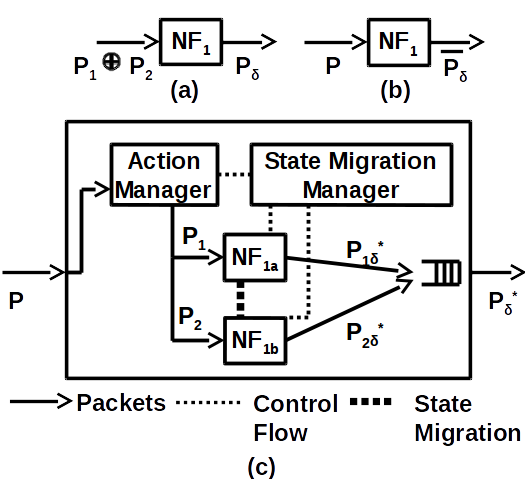}}
\caption{System Model for NF State Migration}
\label{LON-main}
\end{figure}
``States that are updated exclusively by a flow are discussed in
this paper and common states updated by more than one flow (for example,
counters) are outside the scope of discussion. If a set of flows are migrated together
because they share common states, such as proxies that consolidate multiple TCP connections 
into one connection, that is within the scope of discussion.

During migration of a flow, an NF is a tuple $T = (Q, P, P_{\delta}, q_0, 
f, M, U)$ where: 
1) $Q$ is the set of states of the NF for a particular flow (Note: a state is a set of values). 
2) $q_0$ is the initial state, that is, the state of the NF before migration begins
3) $P$ is a set of
input packets 
4) $P_{\delta}$ is a set of output packets, including a $\triangledown$, indicating no packet (explained ahead).
5) $M$ is a set of messages of type $m$, including a $\bot$, indicating no message (explained ahead). 
$m$ contains the state of the NF (source or destination) 
sending the message and the identifier 
of the receiving NF.
6) $f: Q \times P \rightarrow (P_{\delta} \times Q \times M)$
is the transition function. 
7) $U: M \times Q \rightarrow Q $ is the function that receives the message sent by another NF and updates
the current NF. 

\begin{table}
\caption{List of symbols used}
\begin{center}
\begin{tabular}{p{0.2\linewidth} p{0.7\linewidth}}
\toprule
\textbf{Symbol} & \textbf{Meaning} \\
\midrule
$fl$ & The flow being migrated \\
$NF_{1a}$ ($NF_{1b}$) & Source (Destination) NF instance \\
$NF_1$ & An ideal NF \\
$P$ & The sequence of packets entering a flow migration system \\
$P_1$ ($P_2$) & The sequence of packets entering the source (destination) NF \\
$P_1 \oplus P_2$ & The sequence
of packets entering the source and destination NFs, in the order of 
timestamps \\
$p_i$ & A packet belonging to $P$ \\
$s_i$ & A sequence of output packets corresponding to an input packet\\
$q_i$ & An NF state \\
$m(k,q_i)$  & Message sent to NF instance $k$ with state $q_i$ \\
$P_{\delta}$ & Output of an ideal NF when the input is $P_1 \oplus P_2$\\
$\overline{P_{\delta}}$ & Output of an ideal NF when the input is $P$ \\
$Q_1$ ($Q_2$) & The finite sequence of states created on the source (destination) NF\\
$Q_{NF_1}$ & The finite sequence of states created on the ideal NF when its input is $P_1 \oplus P_2$ \\
$Q_{1}^{*}$ ($Q_{2}^{*}$) & A subset of $Q_1$ ($Q_2$) \\
$P_{1 \delta}^{*}$ ($P_{2 \delta}^{*}$) & The sequence of packets output by the source (destination) NF \\
$P_{\delta}^{*}$ & The sequence of packets output by the FMS, in order of being output: $P_{1 \delta}^{*} \boxplus P_{2 \delta}^{*}$\\ 
$\widehat{X}$ & A set representing the elements of $X$\\
$\approxeq_{SS}$ & Partial equivalence of two states with respect to a subsequence of substates denoted by $Q_{ss}$\\
$\models$ ($\not \models$ ) & Immediate synchronization is (not) required\\
\bottomrule
\end{tabular}
\label{symbols}
\end{center}
\end{table}

Further symbols  are listed in Table \ref{symbols}. Let $fl$ be the flow that is being migrated. Fig. \ref{LON-main} depicts
migration of this flow from $NF_{1a}$, an instance of  an NF called $NF_1$, to
another instance of the same NF, $NF_{1b}$. Let us assume that $NF_1$ 
(Fig. \ref{LON-main} (a)), $NF_{1a}$ and
$NF_{1b}$ have the same starting state, $q_0$. The system for NF migration,
referred to as the Flow Migration System (FMS), consists of an Action Manager
(AM), a State Migration Manager (SMM) and the source ($NF_{1a}$) and
destination ($NF_{1b}$) NFs. What is described is a logical organization of FMS
--- in the general case, the source and destination NFs are on different
servers connected over a network. AM manages the following actions on packets:
drop, buffer and nop (indicating no operation). AM also accords a logical
timestamp to each packet when it is received \footnote{This is not a requirement on any flow migration algorithm and is only for purposes of explanation}. 
SMM
decides the NF instance to which a packet $p_i$ of the flow $fl$ must be sent
to, and manages migration of state, if any, from the source to the destination
NF. To aid this, it may instruct AM to drop or buffer packets or forward
packets to an NF instance, using messages
to AM, NFs, both, and/or other network nodes. AM will forward a
packet without delay, unless it is asked to buffer or drop it. SMM may also
instruct NFs to start and stop updating each other's states. AM and SMM may be
co-located with NFs, switches or the controller, or may be independent
entities.

Let $\prec$ be a total order on the \textit{sequence} of packets $P= \langle
p_1, p_2, ..., p_n \rangle$, belonging to $fl$, where $\prec$ denotes the order
in which packets enter FMS.  As packets
$\langle p_1, p_2, ..., p_n \rangle$ enter the NF, they are transformed to 
$\langle s_1, s_2, ..., s_n \rangle$ (denoted as $P_{\delta}$) 
, depending on the state at the NF. 
Transformation of packets is by a function $f()$, where $f(p_i,
q_{i-1})= \langle s_i, q_{i}, m(k,q_{i}) \rangle$, $1 \leq i \leq n$,
where $q_i$ denotes the state of the NF.
$s_i = \langle p_{\delta_{i1}}, ..., p_{\delta_{ic}} \rangle$  denotes a sequence of zero, one or more output packets.  An output packet may be obtained by altering the input packet
(by changing its header, for example).  If there is no change, and only one 
packet is output, $s_i = \langle p_{\delta_{i1}} \rangle =
\langle p_i \rangle$. It is possible that an NF drops an input packet, in which
case no packet ($\langle \triangledown \rangle$) will be output. Packets may get re-ordered after
they enter the network. Suppose FMS receives two packets, 
$p_2$, $p_1$, which were re-ordered by the network before entering it. FMS itself does
not change their order.
It may output $\langle \triangledown \rangle$ corresponding to $p_2$ and $\langle p_{\delta_{11}},
 p_{\delta_{12}} \rangle$, corresponding to $p_1$. In general, FMS may output
no packet, or one or more altered or unaltered packets. It is possible that
due to arrival of a packet, there is no state change in an NF. 
 $P_{\delta}$ 
is totally ordered by $\prec$. The output of NFs is a sequence of a sequence of packets. For simplicity, we denote it as a sequence of packets  

When a flow $fl$ is to be migrated, SMM informs the source and
destination NFs involved in the migration that the flow migration is going to start. $m(k, q_{i})$ denotes the message sent to the other NF $k$ involved
in the flow migration and $q_{i}$ denotes the state to be updated in the NF
that receives the message. An NF sends this during the start of migration to send
the initial state $q_0$, and subsequently if and only if and when a state change
occurs during flow migration.  Updation of the NF that receives the message is
done by $U(m, q_x) = q_{j}$ --- the NF receives $m$ in state $q_x$ and updates
it to state $q_{j}$. SMM informs both the source and the destination NFs when a
flow migration is complete, indicating \textit{end} of flow migration, so that
they can stop updating each others' states. If no packets are being sent to the
sNF after a point in time, SMM can instruct the dNF to stop
updating the state of the sNF, in which case no message ($\bot$) will be
sent. 

The sequence of packets that are received at the sNF in Fig.
\ref{LON-main}(c), $NF_{1a}$, is denoted by $P_1$ and the sequence of packets
that are received at the dNF, $NF_{1b}$, is denoted by $P_2$. SMM
may cause re-ordering of packets. Thus $P_1$ and $P_2$ may not be in timestamp
order. There may be packets in $P$ that are neither present in $P_1$ nor in $P_2$, as some packets in $P$ may be instructed
to be dropped by SMM (if L is not preserved). ${P_1
\oplus} P_2$ denotes a sequence of packets that are present in $P_1$ or $P_2$ and
ordered according to their timestamps.

Fig. \ref{LON-main}(a) depicts an NF instance, called $NF_{1}$, that accepts
the sequence of packets $P_1 \oplus P_2$ and emits a modified
sequence of packets $P_{\delta}$, while undergoing state transformations. Let
the \textit{sequence} of state transformations for this sequence of packets be
$Q_1$.  An NF that does not drop, buffer or re-order packets is called an
\textit{ideal} NF.  $NF_1$ in Fig.~\ref{LON-main}(a) and Fig.~\ref{LON-main}(b)
are ideal NFs.  The sequence of states in $NF_{1}$, denoted by $Q_{NF_{1}}$,
follows the total order $\prec$ of the sequence of packets $P_1
\oplus P_2$ or $P$, as the case may be. The reason for two different inputs,
that is,  $P_1 \oplus P_2$ and $P$, is explained further ahead.
 
Let $Q_1$ be the finite sequence of
states created on $NF_{1a}$ and $Q_2$ on $NF_{1b}$. SMM migrates $Q_1^*$, a subset of $Q_1$, from
$NF_{1a}$ to $NF_{1b}$ and $Q_2^*$, a subset of $Q_2$, from $NF_{1b}$ to
$NF_{1a}$
during flow migration, depending on the flow
migration algorithm employed. It may instruct AM to buffer or drop packets or
both until the flow migration is complete.  A subset of $Q_2$ is considered, as
in some cases, only the latest state may need to be migrated to the destination
NF whereas in some others, after a state is migrated, newer packets may cause
state changes at the sNF, requiring further state migrations.

$P_{1\delta}^*$ denotes the sequence of packets output by $NF_{1a}$ and
$P_{2\delta}^*$ denotes the sequence of packets output by $NF_{1b}$. 
In general, a packet $p_i$, $1 \leq i \leq n$ is transformed to $s_i$
by $NF_{1a}$ or by $NF_{1b}$.  However, it is likely that some packets were
dropped by AM before reaching $NF_{1a}$ or $NF_{1b}$, due to the state
migration algorithm used.  Hence some packets $p_i$, $1 \leq i \leq n$, may not
be transformed to $s_i$ by $NF_{1a}$ or by $NF_{1b}$. $P_{\delta}^* =
P_{1\delta}^* \boxplus P_{2\delta}^*$ denotes the sequence of packets output by FMS
. The order of packets in $P_{\delta}^*$ is the order in which packets
in $P_{1\delta}^*$ and $P_{2\delta}^*$ arrive at the output of FMS, 
represented symbolically by a queue in Fig \ref{LON-main}(c). This may not
be in the order of packets in $P$.  In Fig \ref{LON-main}(a), the input to
$NF_1$ is the union of $P_1$ and $P_2$, the \textit{actual} sequence of packets
processed by either $NF_{1a}$ or $NF_{1b}$, but in the order of their
timestamps (this may not contain all the packets in $P$, as SMM may choose to
drop some packets).  Fig. \ref{LON-main}(b) represents an ideal NF whose input is $P$.
$P$ is the set of all packets that may enter an FMS
. The output then is $\overline{P_{\delta}}$.




\begin{definition}[Order preservation (O)]
\label{order-pres}
Let $Q_1$ be the finite sequence of states created on $NF_{1a}$
and  $Q_2$ on $NF_{1b}$ between the start and end of flow
migration. Let $Q_{NF_1}$ be the finite sequence of states created on the ideal
NF due to $P_1 \oplus P_2$ (Fig. \ref{LON-main}(a)). Migration of a flow $fl$ from an NF instance
$NF_{1a}$ to $NF_{1b}$ is Order-preserving iff 
1)  $Q_1$ is a prefix of $Q_{NF_1}$ and 2)  $Q_2$ is a suffix of $Q_{NF_1}$.
\end{definition}
\begin{remark}
Condition 1 (Condition 2) above ensures that all state updates in the source (destination) NF
are the same as those in the ideal NF. 
The flow migration system \textit{processes} packets
of a flow in the same manner that the sNF would have processed
them in the absence of migration. 

\end{remark}

\begin{definition}[External-order-preservation (E)]
The migration of $P$ to $NF_{1b}$ is
External-order preserving iff: 
$P_{\delta}^*$ is a subsequence of $\overline{P_{\delta}}$.
\end{definition}
\begin{remark}
$P_{\delta}^*$ is a \textit{subsequence} of $\overline{P_{\delta}}$ in the definition of
External-order-preservation to account for the fact that packets may be dropped during
flow migration. Also, the fact that packets may arrive out of order at $NF_{1b}$ needs
to be accounted for. That is why $\overline{P_{\delta}}$ is considered instead of 
$P_{\delta}$.  

\end{remark}

\begin{definition}[Loss-freedom(L)]
Let $\widehat{P_{\delta}^*}$ represent a set containing the elements 
of $P_{\delta}^*$, and  $\widehat{\overline{P_{\delta}}}$ represent a set
containing the elements of $\overline{P_{\delta}}$. 
The migration of $P$ to $NF_{1b}$ is Loss-free iff:
$\widehat{P_{\delta}^*} = \widehat{\overline{P_{\delta}}}$.
\end{definition}

\begin{definition}[No-buffering(N)]
The migration of $P$ to $NF_{1b}$ is
No-buffering preserving iff: no packet $p_i \in P$ is buffered by Action Manager.
\end{definition}

\begin{definition}[Strict Order preservation (SO)]
To Definition \ref{order-pres}, the following condition is added: 3) Both $P_1$ and $P_2$ are subsequences of $P$.
\end{definition}
\begin{remark}
\textit{In addition to preserving O},
to preserve SO, Condition 3 ensures that the flow migration system does not re-order packets as
they \textit{enter} the source or destination NFs.
''
\end{remark}
Messages are sent from one NF instance to another using a reliable transport mechanism so that they are not lost. 
\section{Correctness Criteria for flow migration}
In this section, we first discuss the correctness criteria for a single NF, 
assuming that there is no NF subsequent to the NF under discussion. This
is because the correctness criteria for flow migration for a given NF
will be affected if there are subsequent NFs, as described in the examples
in Section \ref{intro}.

\textbf{Preserving SO and E:} 
Even if there is no NF subsequent to the current NF,
networks expect that packets are not re-ordered and therefore, it is required
that E is preserved. This guarantees that packets are output in the same manner
as an ideal NF. In order to achieve this, a possible solution is that packets
are buffered at AM, and enough time is given to allow all packets to
traverse the source instance and exit the FMS. Then, state migration may be
initiated to the destination instance and after it is complete, the buffered
packets may be released to the destination. However, this requires
violating N. 
Preserving E, by definition, also ensures that
packets are processed in the same manner as an ideal NF, thus guaranteeing that
SO and O are preserved \cite{sukapuram2021loss}.
However, preserving O or SO requires immediate synchronization of every state updated at the
sNF with the dNF and vice versa. This requires incoming
packets to be buffered or dropped by the NF instances, until this state
synchronization is complete.
\begin{definition}[Correctness criteria for migrating a flow from one NF instance
to another]
Migration of a flow $fl$ from an NF instance $NF_{1a}$ to another instance $NF_{1b}$
is considered correct if the migration preserves the property of E.
\end{definition}
\newtheorem{theo}{Theorem}
\newtheorem{coro}{Corollary}[theo]

Consider an FMS that preserves SO. Consider two packets, $p_1$ and $p_2$, with
$p_1$ reaching FMS earlier than $p_2$. Suppose both do not cause state changes.
Let $p_1$ reach the sNF and $p_2$ the dNF. They reach these
NFs in the order of their time stamps. However, $p_2$ may exit the destination
NF before $p_1$, thus violating E, even though SO is preserved. In order to
preserve E, these packets need to be buffered at AM. Only after all packets
exit the sNF, which can be ensured by waiting for a suitable time, must
$p_2$ be sent to the dNF, to ensure that E is preserved.  Thus
preserving E requires packet buffering , which causes packet latency 
and increased storage requirements depending on the duration of migration.  Since
this is expensive, we examine if there is any property weaker than E.

\textbf{Relaxing O:}
In order to preserve SO or even O, either packets must be buffered or dropped, as per the LON theorem \cite{sukapuram2021loss}. Therefore,
we need to examine if O, SO and E can be relaxed  while avoiding
buffering or dropping packets. 
O requiring $Q_1$ to be a prefix of $Q_{NF_1}$ and $Q_2$ to be a suffix of $Q_{NF_1}$ is a stringent requirement, as it requires real-time synchronization of all states.  
If these conditions are relaxed for O, it may be possible to meet N and L.

\textit{The key insight
is that it is possible to consider a given state of a given NF
as a sequence of substates and each of those substates may belong to two types: the ones that need
updates in a synchronous manner and the ones that do not}. Flows may also be
partitioned into subsequences --- those that effect updates to the states that
require immediate synchronisation and those that do not.  For example, in a
NAT, when a SYN is received, the tuple of $\langle source\:IP\:address, source\:port \rangle$ of the
received packet is changed to one retrieved from a pool of such tuples. The
retrieved $\langle source\:IP\:address, source\:port \rangle$ is stored against the received tuple of
$\langle source\:IP\:address, source\:port \rangle$. To forward further packets received at the NAT
in either direction, knowledge of these entries, that form a substate, is
required.  Therefore this substate due to the first packet of a flow must be
synchronized with the dNF instance before the dNF
instance can forward any packet belonging to the flow.  The remaining substates
due to this packet and states due to the rest of the packets (for example,
timestamp of the latest received packet) 
 may be synchronized within a time bound, as they are not essential to
forward packets that are received in their absence.

We rely on prior investigations with regard to which NFs update states that are
relevant to a flow, on a per-packet basis~\cite{sadok2018case}. In prior work,
in order to make efficient use of CPU cores, 
packets are sent to different cores.
Since state modification is mostly done by a few packets at the beginning or at
the end of a connection, such packets are sent to the same core as the original
packet of the flow (a SYN) and the remaining packets are distributed across
cores.  It is observed that of the NFs surveyed, only a Deep Packet Inspection
(DPI) device requires writing state on a per-packet basis and other NFs such as
NATs, Firewalls etc. restrict writing to only a few packets per
flow~\cite{sadok2018case}.  Our paper is not limited to NFs that modify
states only at the beginning or at the end of a flow, though the majority
of NFs are in that category (A discusson on DPIs is omitted for brevity). Programmable data planes and NFs that implement
network functionalities may have more complex requirements. Therefore,
our discussion is, in general, for stateful nodes that maintain per-flow states. 

\textbf{Partitioning flows and states:}
\label{partition}
A state $q_i \in Q$ may be viewed as a sequence of $e$ \textit{substates}, denoted
as $q_{i1}, q_{i2},...q_{ie}$. It is possible that only a subsequence of $q_i$,
denoted as $S_i$, need to be synchronized with the dNF such that
packets arriving at the dNF can be forwarded. Upon arrival of a
sequence of packets of a flow, a sequence of states gets updated. In Fig.
\ref{substate}, corresponding to a packet arrival, the state $q_1$ gets updated.
$q_1$ consists of substates $\langle q_{11}, q_{12},q_{13},q_{14} \rangle$. $q_2$ consists of substates $\langle q_{21}, q_{22},q_{23},q_{24} \rangle$ and so on. 
Fig.~\ref{substate} is used as a running example for all definitions.

\begin{definition}[Partial equivalence of states]
\label{def:partial}
Two states $q_b$ and $q_c$ are partially equivalent with respect to a
subsequence of substates $Q_{ss}$, denoted as $\approxeq_{SS}$ if the values of
the substates corresponding to $Q_{ss}$ are equal in both.

Let $I$ denote the set of indices of the substates of $Q_{ss}$.
$\forall j \in I,  q_{bj} = q_{cj} \Rightarrow q_b \approxeq_{SS} q_c$.
For example, in Fig.
\ref{substate}, let $I=\{1,2\}$ and let $q_{11}= q_{21}$ and $q_{12}= q_{22}$. Then $q_1 \approxeq_{SS} q_2$.
\end{definition}
\begin{definition}[Partial equivalence of the outputs of the transformation function $f$]
Let $f(p_{b}, q_{b-1}) = \langle s_b, q_{b}, m(k_1,q_{b}) \rangle$
and 
 $f(p_{b}, q_{c-1}) = \langle s_c, q_{c}, m(k_1,q_{c}) \rangle$, for 
a given NF instance. $p_b$ is an arbitrary packet. When $f()$ is applied to $p_b$ and $q_{b-1}$ (alternatively,  $q_{c-1}$), $p_b$ is transformed to $s_b$ (alternatively, $s_c$), its state is changed to $q_{b}$ (alternatively, $q_c$) and a message $m$ is sent to the NF instance $k_1$. 
Let $Q_{ss}$ be
a subsequence of substates in $q_{b-1}$. Then
$(s_b = s_c) \land (q_b \approxeq_{SS} q_c) \Rightarrow 
f(p_{b}, q_{b-1})  \approxeq_{SS} f(p_{b}, q_{c-1})$.

The outputs of $f$ are partially equivalent with respect to a subsequence of
substates $Q_{ss}$ if the sequences of packets in the outputs are the same and if their
next states are partially equivalent with respect to $Q_{ss}$.
\end{definition}
\begin{definition}[Immediate Synchronisation]
\label{is}
Let us assume that
$p_xP_{cs}p_y$ is a sequence of packets in timestamp order of a flow $fl$ that is being migrated
from $NF_{1a}$ to another instance $NF_{1b}$ such that 
1) $p_x$ is a packet received at $NF_{1a}$ and $q_x$ the state due to
it and $q_x$ is instantaneously synchronized with $NF_{1b}$ ($p_x$ could
be null, in which case $q_x=q_0$)
2) $P_{cs}$ is a contiguous sequence of packets received at $NF_{1a}$ after $p_x$
3) $p_y$ is a packet received after $P_{cs}$
4) $Q_{ss}$
is a subsequence of substates that gets updated by packets of $P_{cs}$, at $NF_{1a}$
and $S_{css}$ is a sequence of these subsequences due to $P_{cs}$
5) $q_{y-1}$ is the state
at $NF_{1a}$ after it has processed $P_{cs}$. 

Let $S_{css}$ be a sequence of $Q_{ss}$ 
such that:
\label{Imm} $f(p_y, q_{y-1}) \not\approxeq_{SS} f(p_y, q_x)$.
In that case, $S_{css}$ is said to require
immediate synchronization, denoted by $S_{css} \models I$. $I$ denotes the set of indices of the substates of $Q_{ss}$, as explained in Definition \ref{def:partial}. $P_{cs}$ is said to effect the state changes in $S_{css}$,
denoted by $P_{cs} \rightarrow S_{css}$.
\end{definition}
\begin{remark}
This implies that for every packet $p$ in $P_{cs}$, corresponding to every
substate in the subsequence of substates $Q_{ss}$, a message $m(k,q_{ij})$
needs to be sent to update another instance $NF_{1b}$ of the
same stateful node, for any packet $p_y$ received after $P_{cs}$ at $NF_{1b}$
to be processed. $p_y$, received after $P_{cs}$ in the above definition
may belong to a reverse flow in a TCP connection, while $p_xP_{cs}$ may
belong to the forward flow.

Let $p_1p_2p_3p_4p_5$ be the packets of a flow. In Fig.
\ref{substate}, Let $q_1$ be the state at $NF_{1a}$ due to $p_1$. $q_1$ is instantaneously synchronized with $NF_{1b}$. Let $P_{cs}=p_2p_3p_4$ and let $p_y=p_5$. Let the indices $I$ corresponding to $Q_{ss}$ be $\{1,2\}$. Let  
$S_{css} =\{ \langle q_{21}, q_{22}\rangle, \langle q_{31}, q_{32} \rangle, \langle q_{41}, q_{42}\rangle\}$. After processing $P_{cs}$, let the state $q_{y-1}$ at $NF_{1a}$ be $q_4$. 
Let $f(p_5, q_{4}) \not\approxeq_{SS} f(p_5, q_1)$. In that case, $S_{css} \models I$. Thus the substates caused by $p_2p_3p_4$ on $NF_{1a}$ need to be immediately synchronized with $NF_{1b}$.

\end{remark}
\begin{figure}
\centerline{\includegraphics [scale=0.45] {./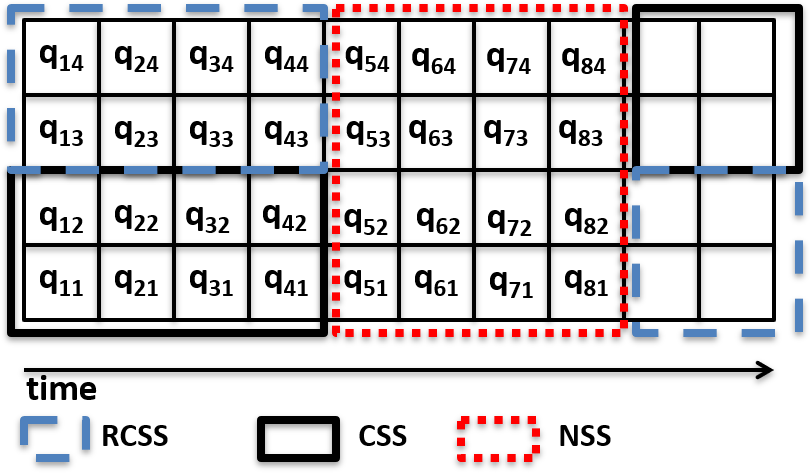}}
\caption{Sequences of states and substates}
\label{substate}
\end{figure}

\begin{definition}[Cohesive Subsequence]
\label{cs}
Let $P_{cs}^{\prime}$  be a sequence of packets
of a flow $fl$ undergoing migration. Let
$P_{cs}$ be the shortest subsequence of $P_{cs}^{\prime}$ such that
1) ($P_{cs} \rightarrow S_{css}) \land (S_{css} \models I$) and
2)
after $S_{css}$ is synchronized, $(P_{cs}^{\prime \prime} \rightarrow S_{css}^{\prime \prime}) \land (S_{css}^{\prime \prime} \not \models I)$ 
where $P_{cs}^{\prime \prime}$ is the subsequence after $P_{cs}$ 
is removed from $P_{cs}^{\prime}$
(after $S_{css}$ is synchronized, the remaining subsequence
$S_{css}^{\prime \prime}$ does not require immediate synchronization) 
then $P_{cs}$ is a Cohesive Subsequence of Packets (CSP).

In other words, the minimal contiguous subsequence of packets $P_{cs}$ that effect a
sequence of subsequence of substates $S_{css}$ that require immediate synchronization is called a
CSP. The sequence of subsequence of substates $S_{css}$ is called a
Cohesive Subsequence of Substates (CSS). The sequence of subsequence of substates due to a CSP that do not
need to be immediately synchronized is called the Remaining Cohesive
Subsequence of Substates (RCSS).

\end{definition}
The source address and port mappings of a NAT may constitute a CSS.
As another example, in Fig. \ref{substate}, the sequence of substates $\langle \{q_{11}, q_{12}\}, \{q_{21}, q_{22} \}, \{q_{31}, q_{32}\}, \{q_{41}, q_{42}\} \rangle $may form a CSS.  Then  the sequence of substates $\langle \{q_{13}, q_{14}\}, \{q_{23},
q_{24} \}, \{q_{33}, q_{34}\}, \{q_{43}, q_{44}\} \rangle$ form an RCSS. This may be a counter for the number of packets received so far and the maximum size of a packet respectively.
\begin{remark} In the definitions \ref{is} and \ref{cs}, the subsequence $P_{cs}$ may belong to a forward flow or a reverse flow or there may be one
each for both the forward and reverse flows.
\end{remark} 

\begin{definition}[Non-Cohesive Subsequence]
A contiguous subsequence of packets other than a CSP is called a Non-Cohesive
Subsequence of Packets (NSP). The corresponding sequence of states, if any,
is called a Non-Cohesive Subsequence of States (NSS).

\end{definition}
In the example in Fig. \ref{substate}, the sequence of states $\{q_{5},
q_{6}, q_7, q_8\}$ form an NSS. That is, these states do not need to be immediately synchronized.
 In a given flow, there may be multiple CSPs and NSPs. A CSP of a flow for a
given stateful node may not be a CSP for another stateful node. Similarly, an
NSP of a flow for a given stateful node may not be an NSP for another stateful node.


\begin{definition}[Eventual  Synchronization]
\label{eventual-sync}
During the duration of migration of a flow $fl$ from an NF instance $NF_{1a}$ to another NF instance $NF_{1b}$,  let $Q_2$ be the sequence of all states on $NF_{1b}$ after completion of flow migration and $Q_{NF_1}$ the sequence of states on an ideal NF due to $P_1 \oplus P_2$. If at least one suffix of $Q_2$ is a suffix of $Q_{NF_1}$, we say that the migration preserves Eventual  Synchronization of states.

\end{definition}

\begin{remark}
During state migration, the source and destination NFs may send messages
to each other to synchronize their RCSS and NSS states. The updates due
to these states must be in the order of the relative timestamps that
AM affixes to each packet sent to these NF instances. 
\textit{This paper assumes that updates to RCSS and NSS are commutative.}
\end{remark}

\begin{definition}[Weak-Order]
\label{wo}
During the duration of migration of a flow $fl$ from an NF instance $NF_{1a}$ to another NF instance $NF_{1b}$, 
1) R1: if all CSS of a
stateful node for $fl$, if any, are Immediately Synchronized with $NF_{1b}$ and
2) R2: if all states are eventually  synchronized,
we say that the property of Weak-Order is preserved.
\end{definition}

\begin{remark}
Weak-O does not prevent packets after a CSP and belonging to an NSP being processed by the sNF. In other words, Weak-O does not specify at what point packets may be forwarded to the dNF. This is specified by the algorithm that performs state migration.
\end{remark}

\section{Preserving Weak-O with N and L}
\label{section:WeakO}
As per the LON theorem, \cite{sukapuram2021loss}, it cannot be guaranteed that all three of L, O and N 
may be preserved simultaneously by a Flow Migration System  where no messages or packets are lost. We posit that
the property of Weak-O is useful because it provides a method for flow
migration that satisfies No-buffering, satisfies
Loss-freedom, 
and preserves packet order ``as much as required'', while
updating all states in timestamp order. The order in which packets
are sent to  NF instances are in the same order as they arrive at the FMS as long
as these packets effect a CSS; for packets that do not effect any CSS,
their order of arrival at  NF instances is not guaranteed. The packets corresponding
to a CSP are output from FMS in timestamp order while there is no guarantee about the
order in which packets corresponding to an NSP is output, whether they effect
state or not. Thus O (and E) are compromised to the extent that it remains useful,
to achieve L and N.

\begin{algorithm}
\caption{Migrate flows preserving Weak-O}
\label{alg-migrate}
\begin{algorithmic}[1]
\LineComment{Procedure for the sNF}
\Procedure {processpkts\_source}{}
\LineComment{Let $fl$ be a flow that is migrated from  $NF_{1a}$ to $NF_{1b}$. $p$ is a packet of $fl$. $Q$ is a queue to store the CSSs to be migrated and $Q^{\prime}$ a queue to store the RCSSs and NSSs to be migrated. States are timestamped.}
\State{$migration \;=\;TRUE$}
\State{$timer\_started\;=\;FALSE$}
\While {$migration$ is TRUE} \label{algo1:start}
		\State{Add CSS not sent to $NF_{1b}$, if any, to $Q$} \label{algo1:addq}
		\State{Send contents of $Q$ to $NF_{1b}$  }
		\State{Add RCSS and NSS not sent to $NF_{1b}$, if any, to $Q^{\prime}$} \label{algo1:addqrem}
		
	\If {$Q$ is empty \textbf{AND} end of current CSP (if any) is reached \textbf{AND} $timer\_started\;=\;FALSE$} \Comment{End of migration} 	\label{re-route}	
			\State{Inform the controller to re-route $fl$ such that further packets are sent to $NF_{1b}$}	
			\State{Start a timer for $T_r$ units}
			\Comment{$T_r$ is the maximum time to re-route a flow. This is to account for packets already on their way to $NF_{1a}$}
			\State{$timer\_started\;=\;TRUE$}
	\EndIf
        \State{Send contents of $Q^{\prime}$ to $NF_{1b}$}  \label{algo1:send}
	\If {$T_r$ expires \textbf{AND} $Q^{\prime}$ is empty} 
	\State{Break out of while loop}
	 \EndIf
\EndWhile
\EndProcedure

\end{algorithmic}
\end{algorithm}

\textbf{Algorithm that preserves Weak-O:}
\label{algo1}
Algorithm \ref{alg-migrate} illustrates the algorithm for the sNF.
After flow migration begins, CSS, if any, is added to a queue $Q$ (line \ref{algo1:addq}). RCSS and NSS, if any, are added to a queue $Q^{\prime}$ (line \ref{algo1:addqrem}). These continue to be sent from the sNF $NF_{1a}$ to the dNF $NF_{1b}$.  When the end of current CSS is reached and $Q$ is empty, further packets will be sent to $NF_{1b}$ by appropriately re-routing the flow (line \ref{re-route}). Since $NF_{1a}$ needs to process the packets that are already on the path towards it, a timer $T_r$ is started. $T_r$ is the maximum time required to re-route a flow (that is, the maximum time to send the first packet to $NF_{1b}$ after sending the CSS from $NF_{1a}$ to $NF_{1b}$). States stored in $Q^{\prime}$ continue to be sent to $NF_{1b}$. After all states have been sent to $NF_{1b}$ and the flow is re-routed (after $T_r$ expires), flow migration is complete. The dNF updates states as per messages received from the sNF, as described in Section \ref{LON-main}.

In Fig. \ref{prop}, if $NF_{1a}$ and $NF_{1b}$ are instances
of NATs, all packets of a flow are sent through $NF_{1a}$ until
$state[x_1,y_1]$ is migrated to $NF_{1b}$. Once migration completes, 
the remaining flow is sent through $NF_{1b}$, while migrating the remaining
substates and states, if any, to $NF_{1b}$. After the migration of $state[x_1,y_1]$, packets in transit towards $NF_{1a}$ may
modify RCSS and NSS, which do not require immediate synchronization to preserve Weak-O. After all state migration
is completed and packets are diverted to $NF_{1b}$, the flow migration
is completed, preserving Weak-O.

\begin{theo}
\label{theorem:preserves}
Algorithm \ref{alg-migrate} preserves Weak-O without buffering or dropping packets. 
\end{theo}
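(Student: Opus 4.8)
The plan is to verify each conjunct of the claim separately, checking the two requirements R1 and R2 of Weak-O (Definition \ref{wo}) together with Loss-freedom and No-buffering, reading the obligations directly off the pseudocode of Algorithm \ref{alg-migrate}. The overall strategy is structural rather than computational: each property follows from an invariant of the \texttt{while} loop combined with the guard controlling when the flow is re-routed.

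First I would establish R1, immediate synchronization of every CSS. The argument hinges on the guard on line \ref{re-route}, which re-routes the flow to $NF_{1b}$ only when $Q$ is empty and the end of the current CSP has been reached. Since every CSS is enqueued in $Q$ (line \ref{algo1:addq}) and then transmitted, $Q$ being empty at the end of a CSP certifies that the associated CSS has already been dispatched to $NF_{1b}$; hence no packet is diverted to $NF_{1b}$ before the CSS of the CSP it follows is synchronized. I would then instantiate Definition \ref{is}: the first packet the destination processes after the re-route plays the role of the packet $p_y$ received after $P_{cs}$, and because the substates in $S_{css}$ are present when it is handled, $NF_{1b}$ computes $f(p_y, q_{y-1})$ rather than $f(p_y, q_x)$, which is exactly what immediate synchronization demands, yielding R1.

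Next I would establish R2, eventual synchronization (Definition \ref{eventual-sync}). I would note that every RCSS and NSS is placed in $Q^{\prime}$ (line \ref{algo1:addqrem}) and repeatedly sent (line \ref{algo1:send}), and that the loop terminates only when $Q^{\prime}$ is empty after $T_r$ expires; combined with the reliable message transport assumed at the close of the model and the stated commutativity of RCSS and NSS updates, every update eventually reaches $NF_{1b}$ and is applied in timestamp order. I would then argue that once the flow is re-routed, $NF_{1b}$ processes the remaining packets of $P_1 \oplus P_2$ from a state already incorporating every synchronized substate, so the tail of $Q_2$ produced after the re-route coincides with the corresponding suffix of $Q_{NF_1}$, exhibiting the common suffix required by the definition. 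Finally I would dispatch No-buffering and Loss-freedom by inspection: the Action Manager is never issued a buffer or drop instruction anywhere in Algorithm \ref{alg-migrate}, giving No-buffering immediately, while every packet is forwarded either to $NF_{1a}$ (which holds all states until the re-route) or, after synchronization, to $NF_{1b}$ (which holds the required CSS), so no packet lacks a processing site.

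I expect the main obstacle to be the interaction of the $T_r$ window on line \ref{re-route} with both R1 and Loss-freedom. Packets already in flight toward $NF_{1a}$ at the instant of re-routing must still be serviced, and I would argue that starting the $T_r$ timer, with $T_r$ the maximum re-route latency, keeps $NF_{1a}$ live long enough to process them; the delicate part is proving that these in-transit packets lie in an NSP and therefore modify only RCSS/NSS, so they threaten neither immediate synchronization nor loss-freedom, and that their commutative updates, once drained from $Q^{\prime}$, leave $NF_{1b}$ in a state whose trailing subsequence matches $Q_{NF_1}$ exactly rather than merely partially (in the sense of $\approxeq_{SS}$).
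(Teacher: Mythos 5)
Your proposal is correct and matches the paper's proof in substance: both arguments hinge on the guard at line \ref{re-route} to ensure no packet is diverted to $NF_{1b}$ before its governing CSS has been migrated (the paper packages this as an exhaustive five-case analysis of the first diverted packet, two cases of which the algorithm is shown to forbid), and both derive R2 from loss-freedom, reliable message delivery and commutativity of RCSS/NSS updates (the paper by contradiction, you directly). The in-flight-packet subtlety you flag at the end is real, but the paper's proof does not resolve it either --- it is deferred to the ``Conditions for preserving Weak-O'' in the Discussion section.
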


\begin{proof}
 Consider a flow $fl$ which is being migrated. Consider each packet $p$ of $fl$ as $p$ traverses the Flow Migration System while Algorithm \ref{alg-migrate} is being applied to it. Two invariants need to be preserved while processing each packet: $p$ must not be buffered or dropped and requirement R1 of the definition of Weak-Order  (Definition \ref{wo}, section \ref{partition}) needs to be met. We prove this first and prove that requirement R2 of the definition of Weak-Order is preserved, later.

Suppose $p$ is the first packet that is sent to the dNF $NF_{1b}$. We exhaustively enumerate the cases possible:
\begin{enumerate}
\item \label{1} $p$ does not belong to a CSP and there were no CSPs in the flow before $p$.
\item \label{2} $p$ does not belong to a CSP. CSS that came into effect before the arrival of $p$, if any, has been migrated already from the sNF to the dNF.
\item \label{3} $p$ does not belong to a CSP. CSS that came into effect before the arrival of $p$, if any, has not been migrated from the sNF to the dNF.
\item \label{4} $p$ belongs to a CSP. CSS that came into effect before the arrival of $p$, if any,  has been migrated. $p$ is the next packet. There is no packet belonging to the CSP before $p$ that has not updated state at the sNF.
\item \label{5}  $p$ belongs to a CSP. CSS that came into effect before the arrival of $p$, if any, has not been migrated. $p$ is the next packet. There is no packet belonging to the CSP before $p$ that has not updated state at the sNF.
\end{enumerate}

For cases \ref{1}, \ref{2}, and \ref{4}, the dNF can process $p$ by definition of CSP and CSS without dropping or buffering $p$. Moreover, $p$ has been forwarded to the dNF only after satisfying the conditions in line \ref{re-route} of Algorithm \ref{alg-migrate}. Therefore the invariants are preserved.

For cases \ref{3} and \ref{5}, $p$ will need to be buffered by definition of CSP and CSS  and therefore the invariants are not maintained. However, Algorithm \ref{alg-migrate} will not allow  these cases to occur, since as per line \ref{re-route} of Algorithm \ref{alg-migrate}, $p$ will be sent to $NF_{1b}$ only after CSS has been migrated. 

Therefore, as far as each packet sent to the dNF after the first is considered, the CSP, if any, has already been migrated. If the CSP has not occurred yet in the flow, a CSS will be created at the dNF and does not need to be migrated from the sNF. Additionally, as per line \ref{re-route}, once a packet is sent to the dNF, further packets are sent only there. Therefore for every packet $p$ in $fl$ the invariants are maintained.

Algorithm \ref{alg-migrate} will not allow cases \ref{3} and \ref{5} to occur, as established previously. Assume that for cases \ref{1}, \ref{2} and \ref{4} above, requirement  R2 of the defintion of Weak-Order (\ref{wo}, section \ref{partition}) does not hold good. Assume that a sequence of $n$ packets after $p$, that is, $p_1$,$p_2$,...,$p_n$ reach $NF_{1b}$ until migration ends and that a large enough time elapses after the packet before $p$ has traversed $NF_{1a}$. The sequence $p$, $p_1$,...,$p_n$ causes a sequence of states on $NF_{1b}$ and they are updated with messages from $NF_{1a}$ for RCSSs and NSSs due to packets earlier than $p$. Assume that the sequence of states at $NF_{1b}$  is $Q_2$. Let the suffix of the least length of $Q_2$ be $q$. Assume that $q$ is also not a suffix of $Q_{NF_1}$  (recall that $Q_{NF_1}$ is the sequence of states on an ideal NF due to $P_1 \oplus P_2$), as all states are not eventually synchronized and requirement R2 (\ref{wo}, section \ref{partition}) does not hold good. Then there are three possibilities: a) packets were dropped in the FMS b) messages from $NF_{1a}$ to $NF_{1b}$ for state updation were lost or c) updation of RCSS and NSS is not commutative. Since all these possibilities are not valid, the assumption is wrong and there exists a suffix of  $Q_2$ which is a suffix of $Q_{NF_1}$. Thus the algorithm for flow migration meets the requirement R2.

\end{proof}

\section{Discussion} 
\subsection{Conditions for preserving Weak-O when a flow is migrated:}
\label{conditions}
Let $I$ be the duration of a given NSP, $T_m$ the time to migrate a CSS and $T_r$ the
maximum time to inform an upstream switch to divert packets to the dNF. A migration
where packets do not need to be buffered is possible iff 
1) An NSP exists within $T_s$ units of time from the point at which migration
is desired, as a flow to be migrated must be migrated as early as possible.  $T_s$
is the latest time duration by which the start of flow migration can be delayed.
2) $T_m + T_r < I$. That is, the duration of such an NSS is large enough that
state migration can occur during the NSS interval and the packets that enter
the network after state migration
can be diverted to the dNF. The segregation of states into CSS, RCSS and NSS is assumed to be known for each NF.
\subsection{Implementation of the same NF in two ways}
\label{discussion}
A DPI device such as a Network Intrusion Detection Systems (NIDS) requires
matching a set of patterns that may occur in incoming traffic and if one of
these patterns match, the designated action for that pattern needs to be
performed. Since the pattern may occur across packet payloads, if packets are
received out-of-order, many systems buffer and re-order them before processing
them.  Suppose the desired action is to note the source IP address of the
packets matching a pattern.  Some packets carrying a part of the pattern may
arrive at the sNF and some other packets carrying a part of the same
pattern may arrive at the dNF. In this case, their states will
require Immediate Synchronisation, as information regarding all the packets
received so far is required before they are matched for a pattern.  Moreover,
in the worst case, multiple patterns may need to be matched and the occurrence
of all the patterns may not be across the same set of packets.  For such NF
implementations, in the worst case, no NSP exists and therefore no NSS.
Therefore, SO or O cannot be relaxed (Condition 1 of Section \ref{conditions} is not
satisfied).

Attackers deliberately re-order packets to cause NIDS to run out of buffer
space, thus making networks vulnerable to attacks. Recent
efforts \cite{yu20163}  search for patterns in re-ordered packets by saving
the prefixes and suffixes of arrived packets separately and matching them with
arriving packets, thus avoiding buffering packets and reducing the amount of
memory required.  When such an NF implementation is used, each packet may be
forwarded as soon as it is processed, without buffering, thus not requiring
Immediate Synchronisation and states may be Eventually 
Synchronized.  After synchronisation, the desired actions may be taken
depending on the conditions that are satisfied. For such NF implementations,
CSPs do not exist.  Therefore Weak-O is inexpensively implemented, that is,
without buffering packets.

\section{Implementation and Evaluation}
\begin{figure}[t]
\centering
\includegraphics [scale=0.7, trim={0 400 50 0},clip] {./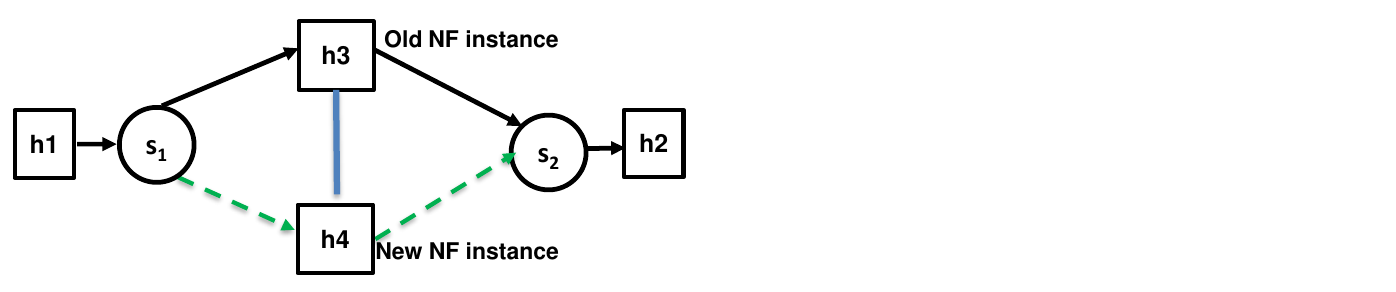}
\caption{Flow Migration from one NAT instance to another}
\label{fig:FlowMig}
\end{figure}
We implement a basic NAT 
and migrate a flow from one NAT instance to another. We use Mininet~\cite{mininet} (v 2.3.0) to implement a simple topology, shown in Fig. \ref{fig:FlowMig}.  $h_1$ and $h_2$ are hosts and $h_3$ and $h_4$ are instances of a NAT. The switches $s_1$ and $s_2$ are connected to a controller (not shown).  The switches are programmed such that flows from $h_1$ to $h_2$ traverse the path $h_1\!-\!s_1\!-\!h_3\!-\!s_2\!-\!h_2$. When a flow from $h_1$ to $h_2$ is to be migrated, the controller triggers migration of states from $h_3$ (\textit{old NAT}) to $h_4$ (\textit{new NAT}) after a time $t_1$ (Table \ref{table:symbols}) after the switches are up. The  states (mapping of (source IP address, source port) to (public IP address, public port)) are migrated as per Algorithm \ref{alg-migrate}. When migration is complete, $h_4$ informs the controller, which then updates the switch rules such that the flow is migrated to the path $h_1-s_1-h_4-s_2-h_2$. The updates are per-packet consistent  \cite{sukapuram2019ppcu}. 
We conduct experiments using this as the basic procedure. The system parameters and their default values are shown in Table \ref{table:symbols}.
The amount of data transferred is in multiples of $C$. 

A flow is started at $h_1$ and without state or flow migration, it is observed  by examining logs captured in Wireshark (version 4.0.6) that there is no packet loss or re-ordering on the old path.\textit{To verify that there is no packet loss or reordering on the new path,} a flow is started at $h_1$ and one packet of the flow is sent over the old path. The next packet is sent only after states have migrated and a time sufficient enough for the packet on the old path to have reached $h_2$ has elapsed. Now the rest of the packets of the flow are sent from $h_1$ over the new path. It is observed through logs  that packets are received without losses and in order at $h_2$, over the new path. 

The NAT implementation has a thread per port that reads packets from raw sockets and queues them. Another thread reads packets from this queue and processes them. The setup supports bandwidths upto $15$ Mbps per link without losses.

The objective of our experiments is to illustrate migration of state and flows without buffering and loss of packets but with some packet reordering, and to measure the effect of these on the goodput. The experiments are tested on a laptop (12th Gen Intel(R) 
Core (TM) i5-1235U, 1300 MhZ with 16 GB RAM) on a Linux VM with the default configuration for Mininet.  $20$ measurements are taken for all experiments. Wherever applicable, error bars are shown for a confidence interval of $95$\%.
The Mininet VM is reset before each set of measurements.  
\begin{table}
\caption{System parameters}
\begin{center}
\begin{tabular}{p{0.1\linewidth}  p{0.5\linewidth}p{0.25\linewidth}}
\toprule
\textbf{Symbol} & \textbf{Meaning} &\textbf{Default value} \\
\midrule
$t_1$ & Start time of flow migration & $15$s \\
$C$ & Data transferred in bytes & $1481481$\\
$B$ & Link bandwidth & 10 Mbps \\
$D$ & Link delay & 10 ms \\
$S$ & Payload size in the application$^*$ & 1000 bytes \\
\bottomrule
\end{tabular}
\label{table:symbols}
\end{center}
\footnotesize{$^*$The actual payload size will vary since TCP\_NODELAY is not set to True}
\end{table} 

\textbf{On Linux TCP:}
Before the last packet $p_o$ traversing the old path reaches $h_2$, the first packet that traverses the new path and which has a sequence number later than that of $p_o$, reaches $h_2$. Then $h_2$ sends a Duplicate Ack to $h_1$. In traditional TCP, when 3 Duplicate ACKs (\texttt{dupThresh} = 3) are received, the sender retransmits the packet and does not wait for the retransmission timeout to expire. This is called \textit{fast retransmit}. However, if some of those frames are on the way to the receiver but are only reordered and not lost, fast retransmits are not required. They decrease the goodput and reduce the congestion window size. In the version of Linux  (Ubuntu 20.04.1 LTS) used by Mininet, the TCP variant is Reno Cubic. Here Selective ACKs (SACKs) \cite{mathis1996rfc2018}  are sent by the receiver to indicate missing frames. These are used by the sender to find the extent of reordering and \texttt{dupThresh} is now varied depending upon the extent of reordering estimated, reducing false fast retransmits. 

Suppose there are false retransmits.  A Duplicate SACK (D-SACK) \cite{floyd2000rfc2883} is used to report a duplicate contiguous sequence of data received at the receiver. Once a D-SACK is received, the sender goes back to the old congestion window size that was being used before entering fast retransmit, thus attempting to nullify the effects of fast retransmit. Further discussions on the variant of TCP used in Linux are available in the literature \cite{arianfar2012tcp,johannessen2015investigate,ha2008cubic}.

We observe in the logs at $h_1$ that corresponding to each retransmit of a data packet, there is a D-SACK sent by the receiver, indicating that the retransmit was spurious. This avoids reduction of the congestion window and maintains the goodput. However, if the number of Duplicate Acks and retransmissions (even if they are later identified as spurious) is very large, goodput is affected as they take time to transmit.

\begin{figure*}[t]
\begin{subfigure}
{0.33\textwidth}
\includegraphics [scale=0.45] {./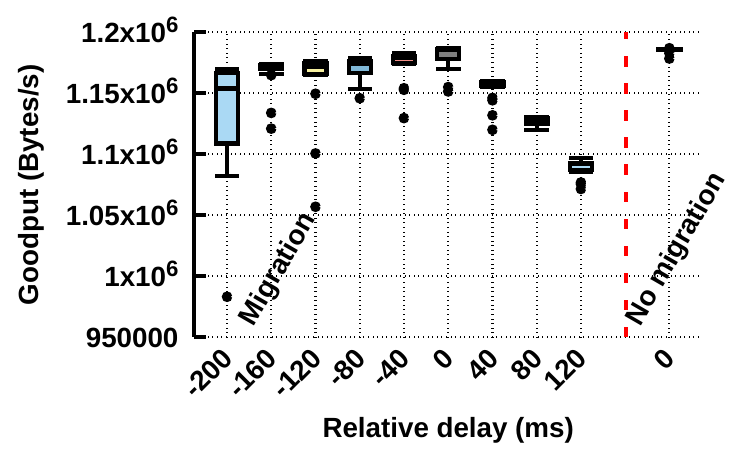}
\caption{Exp1: Goodput with and without migration for a flow of C*250 bytes for various relative delays of the old path.}
\label{fig:exp1-1}
\end{subfigure}
\begin{subfigure}
{0.33\textwidth}
\centerline{\includegraphics [scale=0.45] {./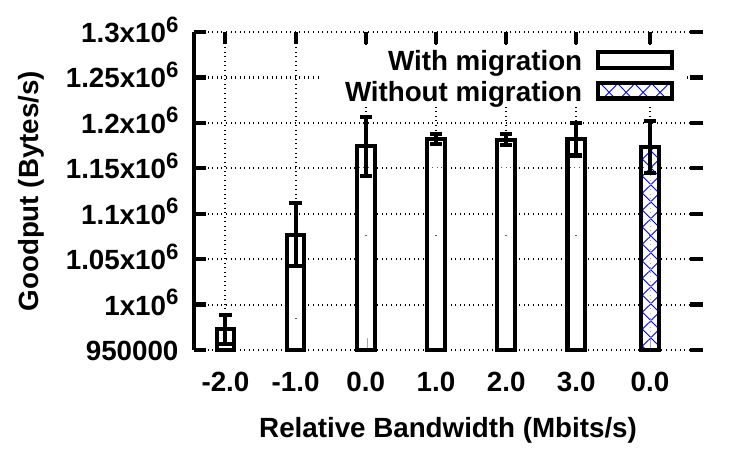}}
\caption{Exp4: Goodput with and without migration for a flow of C*200 bytes when link bandwidths of the new path are changed. 
}
\label{fig:exp6-1}
\end{subfigure}
\begin{subfigure}
{0.33\textwidth}
\centerline{\includegraphics [scale=0.45] {./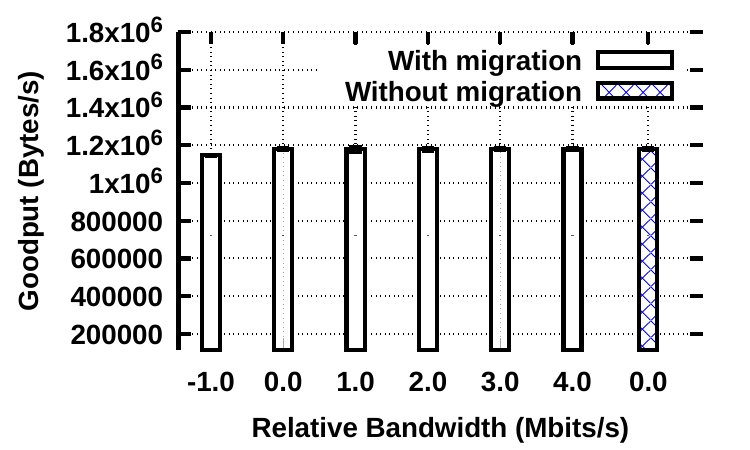}}
\caption{Exp.4: Goodput with and without migration for a flow of C*150 bytes when link bandwidths of the new path are changed. $t_1=15$s.}
\label{fig:exp6-2}
\end{subfigure}
\end{figure*}

\begin{figure*}[t]
\begin{subfigure}{0.33\textwidth}
\includegraphics[scale=0.48] {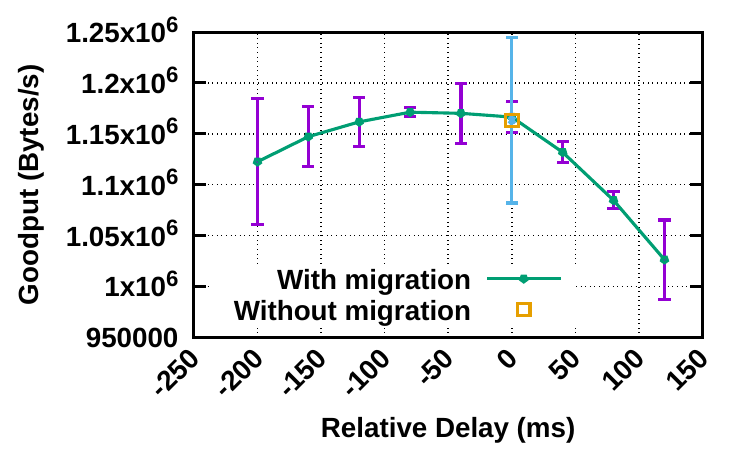}
\caption{For a flow of C*150 bytes transferred}
\label{fig:exp1-2-1}
\end{subfigure}
\begin{subfigure}{0.33\textwidth}
\includegraphics[scale=0.48] {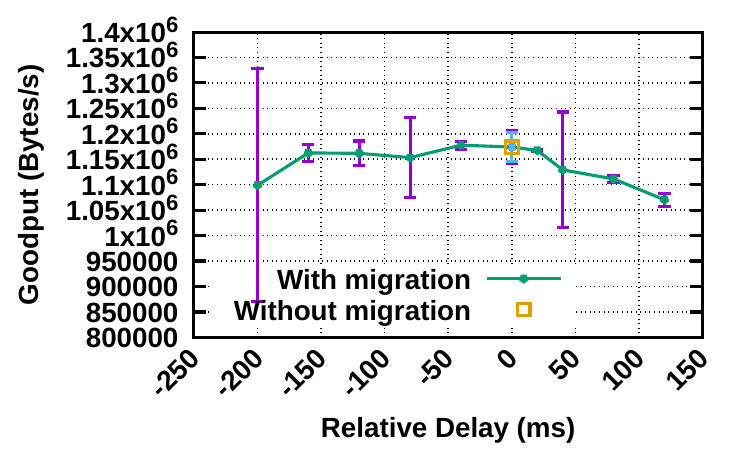}
\caption{For a flow of C*200 bytes transferred}
\label{fig:exp1-2-2}
\end{subfigure}
\begin{subfigure}{0.33\textwidth} \includegraphics[scale=0.48]
{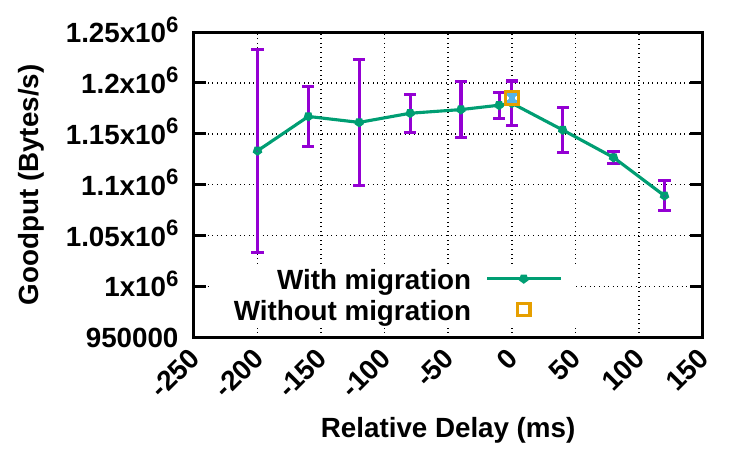}
\caption{For a flow of C*250 bytes transferred}
\label{fig:exp1-2-3}
\end{subfigure}
\caption{Exp1: Goodputs with and without migration for various relative delay of old path, for three values of bytes transferred}
\label{fig:exp1-2}
\end{figure*}
\textbf{Experiment 1:} We simulate the situation where the old and the new paths are asymmetric or have varied latencies.  We fix the size of the data transmitted to be $35.4$ MiB. ($250$ times C). We first increase the delay of the links on the old path with respect to the new path. We set delay differences of $0$, $40$, etc. per link and measure the goodput, when states are migrated after $t_1$ s from the start of the flow. The delays of all the other paths are always set to $10$ ms for this experiment. We also run the experiment by increasing the delays on only the new path, with respect to the old path (shown as negative delays in Fig.~\ref{fig:exp1-1}). The goodput measured for various differences in delays with respect to the old path are shown in Fig.~\ref{fig:exp1-1} as a box plot. For comparison, the maximum RTT, obtained from Wireshark  at $h_1$ in the case where there is migration of a flow is $320$ms and the average RTT is $134.9$ ms. The maximum of the RTT values is higher than the sum of the delays of the paths ($40$ ms) because of the queueing and processing delays in the nodes, mostly in the NAT. 
We also measure the goodput when there is no migration, with all links having equal delays.

Fig.~\ref{fig:exp1-2} illustrates the average goodput for flows of 3 sizes for various relative delays of links of the old path with respect to that of the new path.
When the old and the new paths have no difference in delay, the mean goodput remains the same as when there is no migration, in all the three plots. This is because even though Duplicate Acks are sent, a large number of retransmissions do not occur : a sample log for no difference in delays of the old and new paths and for a flow size of $C*150$, $10$ Duplicate ACKs and $2$ retransmissions were observed. 
As expected, the average goodput decreases when the delay of the old path increases, as  Duplicate Acks (some of which are SACKs indicating reordering) are sent for more packets, resulting in more retransmissions from $h_1$. A sample log for a delay of 40 ms for the old path contains 129 Duplicate ACKs (54 of which are D-SACKs) and 54 retransmissions, which reduces the goodput considerably. 

When links of the \textit{new} path have delays with respect to the old paths, the deterioration in goodput is only marginal, as illustrated in Fig.~\ref{fig:exp1-2} (shown as negative delay differences). For instance, in Fig.~\ref{fig:exp1-2-3}, when the number of bytes transferred is $C*250$ bytes and for a delay of $40$ms for each link of (only) the new path with respect to each link of the old path, the deterioration is $0.57$\%, for a delay of $80$ms it is $0.98$\%, for $120$ms and $160$ms it is $1.1$\%. A log for an instance when $C\!*\!150$ and when the new path has a delay of $40$ms shows no Duplicate ACKs and no re-ordering. This is because new packets do not reach $h_2$ to trigger Duplicate Acks. In conclusion, this method of migrating states and flows is feasible, with reasonable delay differences between the old and the new paths. We also conclude that measuring the goodput is more meaningful than comparing the extent of reordering of packets \cite{rfc5236}, due to the reordering sensitive congestion control algorithms used in Linux.

\begin{figure*}[t]
\begin{subfigure}{0.33\textwidth} 
\centerline{\includegraphics [scale=0.45] {./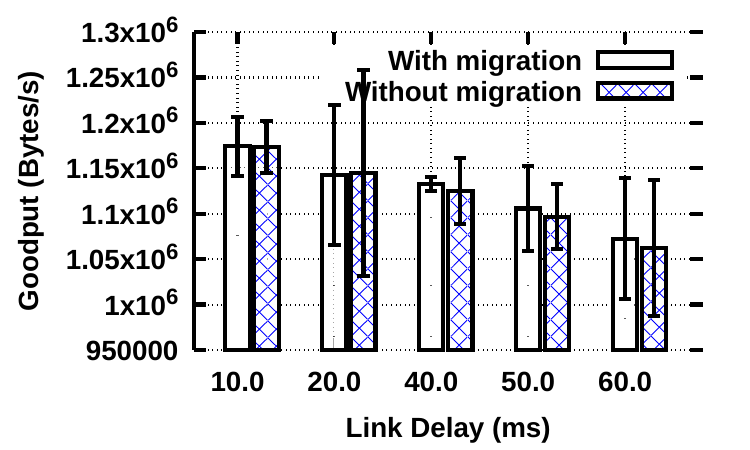}}
\caption{Goodput with and without migration for a flow of C*200 bytes when (absolute) link delays are changed}
\label{fig:exp2-1}
\end{subfigure}
\begin{subfigure}{0.33\textwidth}
\centerline{\includegraphics [scale=0.45] {./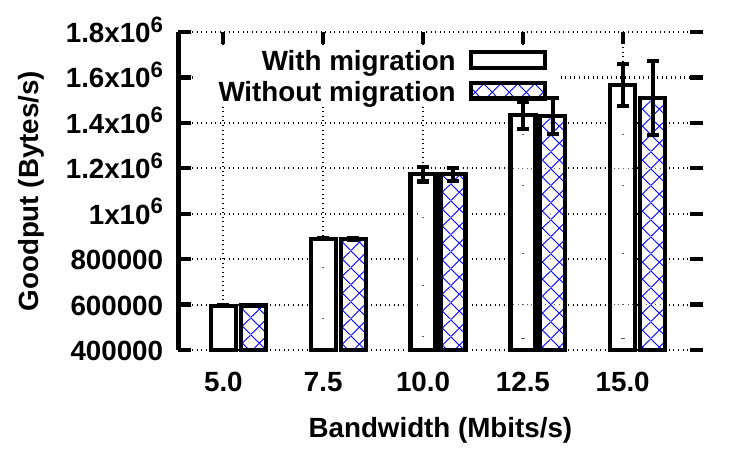}}
\caption{Goodput with and without migration for a flow of C*200 bytes when (absolute) link bandwidths are changed}
\label{fig:exp3-1}
\end{subfigure}
\begin{subfigure}
{0.33\textwidth}
\centerline{\includegraphics [scale=0.45] {./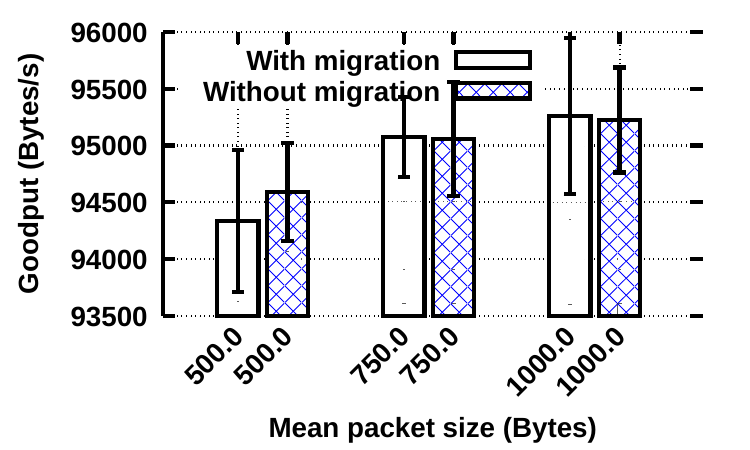}}
\caption{Goodput with and without migration for flows with realistic packet sizes and inter-arrival times}
\label{fig:exp7-1}
\end{subfigure}
\end{figure*}
\textbf {Experiments 2 and 3:} In experiment 2 (3) we perform Experiment 1 for various values of $D$ ($B$), with no relative delay between the old and the new paths, for a flow size of $C*200$ and for $B=10$Mbps ($D=10$ms) and measure the goodput. This is compared with a case where there is no migration. It is observed in Fig.\ref{fig:exp2-1} (Fig.\ref{fig:exp3-1}) that in all cases, there is no deterioration in the goodput when there is migration compared to the case when there is no migration. 

\textbf{Experiment 4:} In this experiment we change the bandwidths of the links of the new path relative to the old path and measure the goodput for various differences in bandwidth. As illustrated in Fig.\ref{fig:exp6-1}, increasing the bandwidth of the links of the new path causes the mean goodput to be comparable to that when there is no migration. Since the effects of migration are transient for a large flow ($C*200$) if the migration is started sufficiently early in the flow ($t_1=5$s considering an average flow duration of $25.3$s when there is no migration), we perform the same experiment for a short flow with $C*150$, with migration starting late in the flow, at $t_1=15$s (the average flow duration when there is no migration is $19.14$s). This is illustrated in Fig. \ref{fig:exp6-2}. 
Here too, the mean goodput is comparable to that when there is no flow migration.
The logs for a specific instance of the case when there is migration and when $B=14$ (Relative Bandwidth =4.0) for the new path (Fig. \ref{fig:exp6-2}) 
reveal the following: there are 71 Duplicate Acks out of which 20 are D-SACKs. 20 frames are retransmitted by $h_1$ and there are 20 D-SACKs, acknowledging that the retransmissions were spurious. 

\textbf{Experiment 5: }For SDNs, the inter-arrival time of packets is modelled as a Poisson process and the packet size distribution is also exponential \cite{lai2019performance}. Thus a TCP flow is generated with packet sizes and interarrival times following the above distribution and the difference in goodputs are plotted for various mean values of packet size and packet delay. If the packet size is $S$ bytes, the mean packet delay is $S$ microseconds. When $S=500$ , $t_1=5$s, when $S=750$, $t_1=10$s and when $S=1000$, $t_1=10$s, to ensure that migration  occurs when the flow is in progress. For all the cases tested, the goodputs with and without migration are comparable, as shown in Fig.\ref{fig:exp7-1}. When $S=500$, the difference in mean is only $0.27$\%.

\textbf{Conclusions:} 1. The time taken to migrate a flow does not matter (unless there is a requirement for that) if the flows are not buffered or packets are not dropped. There is no danger of high packet drops or buffers running out. 2. The number of packets that are reordered do not matter as long as they are detected as reordered and do not result in a large number of packet retransmissions. 
3. The goodput with and without migration are comparable when the old and new paths have the same delays and bandwidths, or when the new path has larger bandwidth or at most $5$ times longer delays.  
4. The above observations are valid for different values of absolute delays and absolute bandwidths of links. 5. Realistic differences in packet sizes and inter-arrival durations within a flow do not affect goodput when there is flow migration.

\section{Weak-O is the strongest criterion}
\label{section:strongest}

\begin{definition}[Strength of an implementation]
Let $Q_{1_{I_1}}$ and $Q_{2_{I_1}}$ be the sequence of states at the sNF and the corresponding dNF respectively between the start and stop of flow migration for an implementation $I_1$ of a flow migration algorithm, for an arbitrary flow. Let $Q_{1_{I_2}}$ and $Q_{2_{I_2}}$ respectively be the same for an implementation $I_2$. ($Q_{1_{I_1}}$, $Q_{2_{I_1}}$) is a \textit{valid} ordered pair if the sequence of states as given by this ordered pair can occur in some flow migration that satisfies the properties of $I_1$. Let $T_{I_1}$ be the set of all possible valid ordered pairs of  ($Q_{1_{I_1}}$, $Q_{2_{I_1}}$). Let $T_{I_2}$ be that of ($Q_{1_{I_2}}$,$Q_{2_{I_2}}$). 
An implementation $I_1$ of a flow migration algorithm is strictly stronger than another implementation $I_2$ if $T_{I_1} \subset T_{I_2}$. This is denoted by $I_1 \prec I_2$.
\end{definition}

Let $I_1$ be an implementation that does not preserve any property during migration. Let $I_w$ be an implementation that preserves Weak-O. Since all valid ordered pairs $(Q_{1_{I_w}}, Q_{2_{I_w}})$ of $I_w$ will be valid in $I_1$, but not vice-versa, $I_w \!\prec I_1$ holds.

\begin{theo}

 Let $I_w$ be an implementation of a  flow migration algorithm  that preserves Weak-O, C1)  that preserves L and N, and C2) with all states Eventually Synchronized. Let $I_1$ be an implementation of a  flow migration algorithm that meets conditions C1 and C2. Then $I_1 \not \prec I_w$. This assumes that flow migration is feasible.
\end{theo}

\begin{proof}

\textbf{Outline:}
Let us assume that $I_1 \prec I_w$ holds. 
Let $T_{I_1}$ and $T_{I_w}$ be the set of all possible ordered pairs of sequences of states in $I_1$ and $I_w$ respectively. The theorem states that $T_{I_1} \not\subset T_{I_w}$.
$T_{I_1}$ and $T_{I_w}$ are finite as it is assumed that state migration takes only a finite amount of time. So the number of states effected during that time is also finite.  



Consider a set of all possible ordered pairs of sequence of states $(Q_1,Q_2)\! \in\! T_{I_w}$ caused by migration of any flow $f$. Suppose there are migrations of $f$ that would result in the same set of ordered pairs of sequence of states $(Q_1,Q_2) \in T_{I_1}$.  Then  it would follow that for every $(Q_1,Q_2)\! \in\! T_{I_w}$, $(Q_1,Q_2)\!\in T_{I_1}$ will also hold. Thus $T_{I_w} \!\subseteq\! T_{I_1}$. Therefore it can be claimed that $T_{I_1}\! \not\!\subset\! T_{I_w}$ and therefore $I_1 \!\not \prec \!I_w$ would hold. This is the outline of the proof. We first consider all possible values of $Q_1$, followed by the corresponding values of $Q_2$. 
 
%

\textbf{On $Q_1$:} Consider a flow $fl$ undergoing state migration and the state that it effects, if any on the source and destination NFs. Let the sequence of states created by packets on the sNF \textit{until the first packet is sent to the dNF} be $Q_1$, for $I_w$. By virtue of the functionality of any NF, $Q_1$ is a valid first element of some ordered pair in $T_{I_1}$. 

Let a packet $p$ of a flow be the first packet that is sent to the dNF $NF_{1b}$, for $I_w$. We use the same exhaustive enumeration that was done in the proof of Algorithm \ref{alg-migrate}, which is reproduced here:
1) $p$ does not belong to a CSP and there were no CSPs in the flow before $p$.
2) $p$ does not belong to a CSP. CSS that came into effect before the arrival of $p$, if any, has been migrated already from the sNF to the dNF.
3) $p$ does not belong to a CSP. CSS that came into effect before the arrival of $p$, if any, has not been migrated from the sNF to the dNF.
4) $p$ belongs to a CSP. CSS that came into effect before the arrival of $p$, if any,  has been migrated. $p$ is the next packet. There is no packet belonging to the CSP before $p$ that has not updated state at the sNF.
5) $p$ belongs to a CSP. CSS that came into effect before the arrival of $p$, if any, has not been migrated. $p$ is the next packet. There is no packet belonging to the CSP before $p$ that has not updated state at the sNF.

For cases 3 and 5, $p$ will need to be buffered and therefore condition C1 is violated. Thus this sequence of states is invalid in both $T_{I_w}$ and $T_{I_1}$, as both require N to be preserved. However,
for cases 1, 2 and 4, the possible set of sequences of states is valid in both $T_{I_1}$ and $T_{I_w}$.  In these cases, CSS, if any, has been migrated to the dNF.

Let $p_1$ be the next packet in the flow. Assume that \textbf{A)} $p$ and $p_1$ belong to a CSP. \textbf{B)} $p$ belongs to a CSP but $p_1$ does not \textbf{C)} $p$ does not belong to a CSP but $p_1$ does \textbf{D)} Both $p$ and $p_1$ do not belong to a CSP. $p_1$ may be sent to \textbf{a)} the sNF or \textbf{b)} the dNF. Let us consider all possible combinations of the above cases. For cases \textbf{A-a} and  \textbf{B-a}, $p_1$ will need to be buffered. This is because if $p_1$ is sent to the sNF, since it belongs to a CSP, the state effected at the dNF by $p$ needs to be immediately synchronized with the sNF. Therefore these cases are invalid in both $I_1$ and $I_w$. \textbf{A-b}, \textbf{B-b}, \textbf{C-b}, and \textbf{D-b} are valid in both $I_1$ and $I_w$. The cases  \textbf{C-a} and \textbf{D-a}  remain to be discussed.

In $I_w$, \textbf{C-a} is not meaningful because the CSS effected due to $p_1$ at the sNF will need to be immediately synchronized with the dNF, and packets subsequent to $p_1$, unless sent to the sNF, will need to be buffered. Sending packets subsequent to $p_1$ to the sNF will delay migration, without any benefit. \textbf{D-a} is not efficient or meaningful in $I_w$ because after flow migration begins, if a packet that effects an NSS is sent to the sNF, the state will need to be eventually synchronized and if it does not effect a state, there is no reason why it should be sent to the sNF instance. \textit{Suppose \textbf{C-a} and \textbf{D-a}  are considered valid in $I_1$. Then it only adds state sequences to $T_{I_1}$ and removes no state sequences from $T_{I_1}$. No state sequences are removed from $I_1$ because implementations that do not allow \textbf{C-a} and \textbf{D-a}  are also valid in $T_{I_1}$}.  If we assume that all values of $Q_2$ are identical in $T_{I_1}$ and $T_{I_w}$ (we prove this next),  $T_{I_1} \not\subset T_{I_w}$. The above argument may be extended for packets subsequent to $p_1$.

\textbf{On $Q_2$:} Now let us consider the case after $fl$ is moved to $NF_{1b}$ and packets are no longer sent to $NF_{1a}$. 
Since $fl$ has moved to $NF_{1b}$, packets will no longer need to be buffered or dropped and thus C1 would never be violated.
Consider the sequence of states $Q_2$ created by the packets of $fl$ after and including $p$ on $NF_{1b}$. Note that a sequence of states $Q_2$ will correspond to some sequence of states $Q_1$ considered earlier. 

 At least one suffix of $Q_2$ is a suffix of $Q_{NF_1}$ (refer Table \ref{symbols}), as condition C2 needs to be satisfied, for both $I_1$ and $I_w$, by definition of Eventual Synchronization. Let $q_{I_w}$ be such a suffix for $I_w$ and $q_{I_1}$ that for $I_1$. Either $q_{I_w}= q_{I_1}$ or $q_{I_w}$ is a suffix of $q_{I_1}$  or $q_{I_1}$ is a suffix of $q_{I_w}$. In all the three cases, a suffix of least length $q_l$ may be found, that is a suffix of both $q_{I_1} $ and $q_{I_w}$. $q_l$ is also a suffix of $Q_{NF_1}$. Therefore any of the sequences of states  $Q_2$ is a valid second element of the ordered pair belonging to both $T_{I_1}$ and $T_{I_w}$. That is, there is no other condition to be considered. The first element of the ordered pair ($Q_1$,$Q_2$) is present  in $T_{I_1}$ due to packets until $p$ and has already been discussed.

Therefore, from the above two paragraphs, it can be concluded that for every $(Q_1,Q_2) \in T_{I_w}$, $(Q_1,Q_2) \in T_{I_1}$ will also hold and thus  $T_{I_w} \subseteq T_{I_1}$. Hence $T_{I_1} \not\subset T_{I_w}$.  Therefore $I_1 \not \prec I_w$ holds.

\end{proof}


\section{Related Work}
Our work builds on the LON theorem\cite{sukapuram2021loss} to explore correct flow migration and how relaxing O is useful. 

\textbf{Stateful Network Functions on servers:}
Migration of flows  and their network states from a Network Function instance to another is well-studied.
The properties of Loss-Freedom, Order and Strict Order are preserved in some of the work related to migrating flows (OpenNF ~\cite{gember2015opennf}, \cite{wang2018challenges}, SliM~\cite{nobach2017statelet}, ~SHarP\cite{peuster2018let}, \cite{wang2017consistent}, CHC~\cite{khalid2019correctness}, TFM~\cite{wang2016transparent}), but No-Buffering is not preserved. Another solution preserves only Loss-freedom while reducing the buffer size required ~\cite{gember2015improving}. Solutions that reduce buffer size required during migration do not eliminate the need for buffering altogether \cite{szalay2019industrial, khalid2019correctness, woo2018elastic}. These solutions take the approach of distributing states. However, since packets arriving at the dNF instance need to wait for state to be migrated from wherever it exists, No-buffering is not preserved. States may be stored outside NFs and fetched by the NF when required \cite{kablan2017stateless}. However, fetching the latest value of a state from this store will require packet buffering. Our solution eliminates buffering and packet drops  and reordering is managed within limits by the TCP variants on real networks. NFs that get affected by packet re-ordering are re-architected \cite{yu20163} to reduce vulnerabilities; this naturally supports flow migration preserving Weak-O.

\textbf{Stateful applications in Programmable Switches:} Since stateful processing is central to  applications such as load balancing 
, congestion control 
and packet scheduling, 
flow migration will require migrating state. Many such applications are briefly surveyed in \cite{ports2019should,xing2020secure,benson2019network}. Surveys on stateful data planes  \cite{hauser2023survey,ZHANG2021107597,kfoury2021exhaustive} have further information. Our solution is useful for all such applications.

\textbf{Other states:}
Maintaining consistency of  state updated across multiple flows  is dealt with elsewhere \cite{khalid2019correctness,wu2022nflow,muqaddas2020optimal}. While our paper deals only with per-flow states the algorithm proposed in this paper may be explored to update such states without buffering of packets.

\textbf{Packet re-ordering:}
Usb\"{u}t\"{u}n et al. argue that the variants of TCP used in Linux \cite{arianfar2012tcp,johannessen2015investigate,ha2008cubic} are robust to reordering and therefore reordering may not be an issue in the Internet \cite{usubutun2023switches}. Our work agrees with this observation.

\section{Conclusions}
In order to correctly migrate a flow from one
stateful NF instance to another, the property of External-Order
must be preserved.  Since this
requires packets to be buffered or dropped,  we proposed the
property of Weak-O, which requires real-time synchronization of state due to
only a (or a few) sequence(s) of packets, called the Cohesive Sequence of
Packets. 
This relaxation helps us eliminate packet buffering, thus reducing
packet latency and packet drops, while compromising External-Order. We
proposed an algorithm for flow migration that preserves Weak-O. 
 We implemented the algorithm and our evaluations show that the goodput with and without migration are comparable when the old and new paths have the same delays and bandwidths, or when the new path has larger bandwidth or at most $5$ times longer delays, for TCP flows. The widely used TCP variants on Linux that are robust to reordering makes the algorithm practical and  opens the possibility of better flow migration algorithms for stateful NFs. Moreover, we illustrated that NFs may be re-written to take advantage of Weak-O. We plan to investigate how to characterize this and how Weak-O may be preserved for migrations across NF chains.

\bibliographystyle{IEEEtran}
\bibliography{IEEEabrv,reference}

\end{document}